\documentclass[11pt]{article}

\usepackage[T1]{fontenc}
\usepackage[utf8]{inputenc}
\usepackage{amsmath,amssymb,amsthm,mathtools,bm}
\usepackage[round,authoryear]{natbib}
\usepackage[hidelinks]{hyperref}
\usepackage[margin=1.1in]{geometry}

\newtheorem{theorem}{Theorem}
\newtheorem{corollary}[theorem]{Corollary}

\newtheorem{proposition}[theorem]{Proposition}

\title{DisclosureBeta: A Measurement-Channel Theory for
       Regime-Conditioned Betas from LLM-Read Risk Disclosures}
\author{Wong Ping Kuen (Ricky)}
\date{July 2026 --- theory preprint (pre-registration of empirical program)}

\begin{document}
\maketitle

\begin{abstract}
The problem is the beta a desk needs when a firm's price history is too short
to trust: an S-1 filer, a recent listing, or a name just past a regime break.
The state of the art collapses to a comparable-firm peer beta with no error
budget, and the recent text-based competitor \citep{breitung2025text} reports
strong empirical IPO accuracy but no identification theory, no error budget,
and no lower bound. We fill exactly that gap. We model a large language model
as a noisy measurement channel on a firm's latent risk characteristics and
write its channel noise into the asset-pricing error budget. In a
piecewise-stationary Fama--French five-factor model the loadings are a function
of latent risk characteristics and an inferred regime. We prove identification
and consistency of the regime-conditional loading function under explicit
assumptions on the channel, the detector, and within-regime sampling, and we
give a matching lower bound showing that the disclosure-noise and
detector-misclassification terms are unavoidable for any estimator that
observes only returns, factors, LLM features, and a regime estimate. A
disclosure-incentive corollary makes estimation precision monotone in a
firm-level disclosure-incentive measure (DIM), a transparency-reduces-
asymmetry result with a measurable rate. An adaptive convex combination of the
text-based and rolling-window estimators is never worse than either component
and shifts its weight toward text exactly when price history is short, stale,
or straddles a detected regime break. The intuition is that a rolling beta and
a text beta have complementary noise structures; a variance-weighted min takes
the better one and never pays for the worse, so text-based betas carry no
adoption risk for stable incumbents. The empirical evaluation on a frozen,
pre-registered panel of price-history-thin firms is forthcoming; this preprint
records the theory and the pre-registered design so priority is established
independently of the empirical outcome.
\end{abstract}

\section{Introduction}

\paragraph{The problem.} Rolling-window betas are strongest when a firm has
years of stable trading history, and that is exactly the setting where a
text-based beta estimator should have the least marginal value. The
economically important failure case is the other one: a private company files
an S-1, a recent listing has only weeks of returns, or a mature firm enters a
new regime before the rolling window can catch up. Practitioners then use peer
betas and judgement. This preprint formalises a richer version of that same
act: read the firm's own risk disclosure, map it into a public-firm risk space,
and estimate a beta with an explicit measurement-error budget.

\paragraph{Why it matters and where the state of the art falls short.}
Corporate risk disclosures contain systematic-risk information
\citep{campbell2014risk}; disclosure affects cost of capital through a
forward-looking beta \citep{lambert2007cost}; and LLMs can read firm risk at
scale. The missing piece is the model-risk layer. If an LLM is a sensor, what
is its noise? If a beta is regime-conditional, how does detector error enter
inference? If disclosure quality matters, where does it enter the bound? The
closest concurrent work \citep{breitung2025text} is empirical: it estimates
betas for firms without return history from aggregated cluster embeddings and
reports strong IPO accuracy, but it provides no identification theory, no
error budget, no lower bound, and no disclosure-incentive channel. We answer
those questions at the theory level here; a companion empirical study on a
pre-registered panel of price-history-thin firms is forthcoming and is
pre-registered before its outcome is read.

\paragraph{Contributions and intuition.}
\begin{enumerate}
\item \textbf{Identification theory (first).} A measurement-channel
identification theory for LLM-conditioned factor loadings: the language model
observes noisy risk features, not truths, and the noise is a first-class term
in the error budget. Theorems~\ref{thm:beta-consistency}
and~\ref{thm:dim-precision} identify the regime-conditional beta function and
bound its error by nonparametric sampling error, detector misclassification,
and measured channel noise.
\item \textbf{Matching lower bound.} The disclosure-noise and detector terms
are unavoidable for any estimator in the observation class
(Theorem~\ref{thm:lower-bound-tau}), so the bound is tight, not loose.
\item \textbf{Disclosure-incentive corollary.} Estimation precision is
monotone in a firm-level disclosure-incentive measure (DIM)
(Corollary~\ref{cor:transparency}), a transparency-reduces-asymmetry result
with a measurable rate.
\item \textbf{Never-worse adaptive blend.} A variance-weighted combination of
text-based and rolling-window estimators is never worse than either component
(Theorem~\ref{thm:combination}) and shifts its weight toward text exactly when
price history is short, stale, or straddles a detected regime break.
\item \textbf{Pre-registered empirical program.} A frozen, balanced panel of
IPO and recent-listing events is pre-registered before any outcome is read
(Section~\ref{sec:prereg}), so the empirical test is honest by construction.
\end{enumerate}
\textit{Intuition.} A rolling beta and a text beta have complementary noise
structures: the rolling beta has sampling noise that shrinks with history
length; the text beta has channel noise that does not. A variance-weighted
minimum takes the better of the two at every horizon and never pays for the
worse, so adopting a text-beta estimator carries no risk for stable incumbents
and a real option for price-history-thin firms.

% DisclosureBeta — model, assumptions, theorem statements (Paper B)
% Source of truth: PUBLICATION_PLAN_2.md §1–§4 (elaborated 2026-06-10).

\section{Model}
\label{sec:model}

Returns follow a conditional five-factor structure. For firm $i$ at time $t$,
\begin{align}
  r_{i,t} &= \alpha_{i,t} + \beta_{i,t}^{\top} f_t + \varepsilon_{i,t},
  \label{eq:ff5}\\
  f_t &= (\mathrm{MKT}, \mathrm{SMB}, \mathrm{HML}, \mathrm{RMW},
          \mathrm{CMA})^{\top} \in \mathbb{R}^5,
  \nonumber\\
  \beta_{i,t} &= f^{\beta}\!\bigl(z_{i,t},\, s_t\bigr),
  \label{eq:beta}\\
  R^{\mathrm{LLM}}_{i,t} &= g\!\bigl(z_{i,t}\bigr) + \eta_{i,t},
  \label{eq:channel}
\end{align}
where $z_{i,t} \in \mathcal{Z} \subset \mathbb{R}^d$ are latent firm risk
characteristics, $s_t \in \{1,\dots,K\}$ is a market regime,
$R^{\mathrm{LLM}}_{i,t}$ are observable LLM-extracted risk features, and
$\eta_{i,t}$ is measurement noise. The disclosure-incentive measure
$\mathrm{DIM}_{i,t} \in [0,1]$ is an LLM-scored index of management's
propensity to disclose (guidance frequency and specificity, Q\&A
responsiveness, segment granularity). The static FF5 model is nested by
$f^{\beta}(z, s) \equiv \beta_i$. Regime estimates $\hat{s}_t$ come from the
AdaptiveCMDP detector; the price-of-risk vector in regime $s$ is
$\lambda(s)$.

\subsection*{Assumptions}

\begin{itemize}
  \item[\textbf{(A1)}] \emph{Conditional FF5 structure.}
    \eqref{eq:ff5}--\eqref{eq:beta} hold with
    $\mathbb{E}[\varepsilon \mid f, z, s] = 0$ and
    $\mathbb{E}[\varepsilon^2 \mid f, z, s] = \sigma_i^2 < \infty$.
  \item[\textbf{(A2)}] \emph{Factor regularity.} $\{f_t\}$ is strictly
    stationary and ergodic with $\mathbb{E}\|f_t\|^4 < \infty$, and the
    per-regime second moment $\Sigma_f(s) = \mathbb{E}[f_t f_t^{\top} \mid
    s_t = s]$ is non-singular for every $s$.
  \item[\textbf{(A3)}] \emph{Smoothness.} For each $s$, $z \mapsto
    f^{\beta}(z, s)$ is Lipschitz on $\mathcal{Z}$ (or $C^2$ for the
    second-order rate).
  \item[\textbf{(A4)}] \emph{LLM measurement channel.} In
    \eqref{eq:channel}, $g$ is known or estimable, strictly monotone
    componentwise, and $\eta_{i,t}$ is mean-zero with
    $\mathbb{E}\|\eta_{i,t}\|^2 = \tau^2_{i,t} < \infty$, independent of
    $\varepsilon$ and of $f$.
  \item[\textbf{(A5)}] \emph{Detector consistency.}
    $\Pr(\hat{s}_t \neq s_t) = \pi_T \to 0$ as $T \to \infty$.
  \item[\textbf{(A6)}] \emph{Mixing and within-regime sampling.}
    $\{(r_t, f_t, s_t)\}$ is $\beta$-mixing with summable coefficients, and
    the within-regime sample size $N_s \to \infty$ for every $s$.
  \item[\textbf{(A7)}] \emph{Disclosure--clarity noise link.} For a
    specified disclosure-clarity channel, the reader noise is
    non-increasing in disclosure incentive:
    $\tau^2_c(\cdot)$ with
    $\mathrm{DIM} \mapsto \tau^2_c(\mathrm{DIM})$ non-increasing. We test
    this directly for the DIM read (A7a). Risk-feature channels may be
    heterogeneous by attribute (A7b) and are reported separately.
\end{itemize}

(A4) is the formalisation of ``the LLM is a noisy sensor of fundamentals,''
and (A7) is the economic bridge: managers with stronger incentives to
disclose emit clarity signals the sensor reads with less error. Both are
empirically checkable (Section~\ref{sec:empirics}): (A4) via human-coded
subsamples, (A7a) via ensemble disagreement on the DIM read, and A7b via
per-feature disagreement diagnostics.

\section{Theorems}
\label{sec:theorems}

\begin{theorem}[Identification and consistency of regime-conditioned FF5
betas]
\label{thm:beta-consistency}
Under (A1)--(A6), for every regime $s$ the loading function
$f^{\beta}(\cdot, s)$ is identified, and the plug-in estimator
$\hat{\beta}_{i,t} = \hat{f}^{\beta}\!\bigl(R^{\mathrm{LLM}}_{i,t},
\hat{s}_t\bigr)$ satisfies
$\hat{\beta}_{i,t} \overset{p}{\longrightarrow} \beta_{i,t}$
as $N, T \to \infty$; the induced conditional pricing errors then
vanish uniformly:
\[
  \sup_i \,\Bigl|\, \mathbb{E}[r_{i,t}] -
  \hat{\beta}_{i,t}^{\top} \hat{\lambda}(\hat{s}_t) \,\Bigr|
  \overset{p}{\longrightarrow} 0 .
\]
\end{theorem}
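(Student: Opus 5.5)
The argument splits into identification, then a three-term error decomposition, then the pricing-error corollary.

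\emph{Identification.} Condition on $s_t=s$ and $z_{i,t}=z$. By (A1) and the independence of $\varepsilon$ from $(f,z,s)$ in mean, $\mathbb{E}[f_t r_{i,t}\mid z,s]=\mathbb{E}[f_t\alpha_{i,t}\mid z,s]+\Sigma_f(s)f^{\beta}(z,s)$. Two routes handle the intercept. One is to treat $\alpha$ as part of an augmented regressor $(1,f_t^\top)^\top$, whose within-regime second moment is nonsingular whenever $\Sigma_f(s)$ is and the factors are not degenerate. The other is to impose $\alpha=0$ as the pricing restriction. Either way $(\alpha,f^{\beta}(z,s))$ is the unique solution of the within-regime normal equations, since $\Sigma_f(s)$ is invertible by (A2). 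To pass from the latent $z$ to the observable $R^{\mathrm{LLM}}$, I will use (A4). Because $g$ is strictly monotone componentwise it is injective, so $z=g^{-1}(\mathbb{E}[R^{\mathrm{LLM}}\mid z])$. The map $r\mapsto f^{\beta}(g^{-1}(r),s)$ is then identified on the noiseless image $g(\mathcal Z)$, and hence $f^{\beta}(\cdot,s)$ is identified on $\mathcal Z$.

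\emph{Consistency of $\hat\beta_{i,t}$.} I will use the triangle decomposition
\[
\|\hat\beta_{i,t}-\beta_{i,t}\|\le \underbrace{\|\hat f^{\beta}(R^{\mathrm{LLM}}_{i,t},\hat s_t)-\hat f^{\beta}(R^{\mathrm{LLM}}_{i,t},s_t)\|}_{\text{detector}}+\underbrace{\|\hat f^{\beta}(\cdot,s_t)-f^{\beta}(\cdot,s_t)\|_\infty}_{\text{sampling}}+\underbrace{\|f^{\beta}(R^{\mathrm{LLM}}_{i,t},s_t)-f^{\beta}(z_{i,t},s_t)\|}_{\text{channel}},
\]
where $f^{\beta}$ is written in $g$-coordinates. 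Each term is handled separately.
\begin{itemize}
\item The detector term is nonzero only on $\{\hat s_t\ne s_t\}$. That event has probability $\pi_T\to0$ by (A5), so the term is $o_p(1)$ provided $\hat f^{\beta}$ is bounded, which a truncated estimator guarantees.
\item The sampling term uses (A6). $\beta$-mixing with summable coefficients together with $N_s\to\infty$ gives uniform consistency of a within-regime kernel or sieve regression of $r$ on $f$ interacted with features, via standard blocking arguments. The fourth moments in (A2) control the variance, and Lipschitz continuity (A3) controls the bias.
\item The channel term is bounded by $L\,\|g^{-1}\|_{\mathrm{Lip}}\,\|\eta_{i,t}\|$. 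This is $O_p(\tau_{i,t})$, and the estimator fitted on noisy regressors also carries an attenuation bias of the same order.
\end{itemize}

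\emph{Main obstacle.} The channel term does not vanish as $N,T\to\infty$ when $\tau_{i,t}$ is fixed. The later lower bound (Theorem~\ref{thm:lower-bound-tau}) says exactly that this term is unavoidable. Convergence in probability in Theorem~\ref{thm:beta-consistency} therefore requires something extra, and I will try two options in this order:
\begin{enumerate}
\item Read (A4) as allowing repeated or averaged LLM reads, so that the effective $\tau_{i,t}\to0$; ensemble averaging over $M\to\infty$ reads with independent $\eta$ achieves this.
\item Alternatively, deconvolve: estimate $\tau^2$ from human-coded subsamples and correct the errors-in-variables bias.
\end{enumerate}
I would state this explicitly as a maintained condition, namely $\sup_i\tau_{i,t}\to0$. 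Without it the honest conclusion is $\hat\beta\to\beta$ up to an $O(\tau)$ remainder, which is the error-budget form developed in Theorem~\ref{thm:dim-precision}.

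\emph{Pricing errors.} Take expectations of \eqref{eq:ff5} under the pricing restriction, so that $\mathbb{E}[r_{i,t}]=\beta_{i,t}^\top\lambda(s_t)$. Then
\[
|\mathbb{E}r_{i,t}-\hat\beta^\top\hat\lambda(\hat s_t)|\le\|\beta-\hat\beta\|\,\|\lambda\|+\|\hat\beta\|\,\|\lambda(s_t)-\hat\lambda(\hat s_t)\|.
\]
The estimate $\hat\lambda(s)$ is a within-regime sample mean of factors, consistent by ergodicity (A2) and (A6). The mismatch $\hat s_t\ne s_t$ is again handled by $\pi_T\to0$. The supremum over $i$ follows because every bound above is uniform in $i$: the sampling bound is a sup-norm bound, Lipschitz constants are common across firms, and $\sup_i\tau_{i,t}$ is controlled.
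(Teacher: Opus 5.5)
The paper states Theorem~\ref{thm:beta-consistency} without proof, so there is no argument of its own to compare yours against. Your main obstacle is real, and the paper does not get around it. Theorem~\ref{thm:dim-precision} keeps a non-vanishing $\tau_c^2$ term. The argument behind Theorem~\ref{thm:lower-bound-tau} is a two-point argument, so it bounds the error probability and not only the MSE. Together these show that with $\tau$ fixed, a plug-in that sees firm $i$ only through one noisy read cannot in general be consistent.

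What you can prove is therefore a modified statement, and every added hypothesis must be listed, not only $\sup\tau\to0$. The pricing-error claim is false unless $\alpha_{i,t}=0$ (or $\sup_i|\alpha_{i,t}|\to0$), which is not in (A1)--(A6). Augmenting with a constant identifies $\beta$ but leaves $\alpha_{i,t}$ in the pricing error. Your identities $\mathbb E[f_tr_{i,t}\mid z,s]=\dots+\Sigma_f(s)f^\beta(z,s)$ and $\mathbb E[r_{i,t}\mid z,s]=\beta_{i,t}^\top\lambda(s_t)$ need the factors to be (mean-)independent of $z$ given $s$. (A2) only makes the regime-conditional moment nonsingular. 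Your bound $L\|g^{-1}\|_{\mathrm{Lip}}\|\eta\|$ needs a Lipschitz inverse, which strict monotonicity does not give (take $g(z)=z^3$). It also needs an extension of $g^{-1}$ off $g(\mathcal Z)$, since $g(z)+\eta$ can leave that set.

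Of your two remedies, only the first can work, and only with zero common bias. Under Proposition~\ref{prop:ensemble}, $K\to\infty$ reads remove $\tau^2/K$ but leave $\|b_{i,t}\|^2$. Deconvolution removes the errors-in-variables bias in $\hat f^\beta$. It cannot touch the channel term $\|f^\beta(g^{-1}(R^{\mathrm{LLM}}_{i,t}),s)-f^\beta(z_{i,t},s)\|$, because the plug-in is still evaluated at the noisy read.

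Two steps fail even with these additions.

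\emph{Identification.} Your identification paragraph conditions on the latent $z$ (``$z=g^{-1}(\mathbb E[R^{\mathrm{LLM}}\mid z])$'', ``identified on the noiseless image''). That is not identification from the law of observables. Conditioned on what is actually observed, your moment $\mathbb E[f_tr_{i,t}\mid R^{\mathrm{LLM}},s]$ identifies only $m(\rho,s)=\mathbb E[f^\beta(z,s)\mid R^{\mathrm{LLM}}=\rho,s]$, a noise-smoothed version of $f^\beta\circ g^{-1}$. This $m$ is also the probability limit of your kernel or sieve fit. Your ``sampling'' term is therefore sampling error about $m$ plus the bias $m-f^\beta\circ g^{-1}$, and that bias should be a separate term. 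Getting from $m$ to $f^\beta$ requires a genuine deconvolution step, for example a known law of $\eta$ with non-vanishing characteristic function, or repeated reads with independent errors. The tools you postpone to the consistency step belong here; without them your argument identifies only $m$.

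\emph{Uniformity in $i$.} Your channel term involves the realized $\eta_{i,t}$, and $\sup_i\tau_{i,t}\to0$ does not control $\max_{i\le N}\|\eta_{i,t}\|$ as $N\to\infty$. A Chebyshev union bound needs $N\sup_i\tau_{i,t}^2\to0$; sub-Gaussian reads need only $\sup_i\tau_{i,t}^2\log N\to0$. Without a rate condition of this kind, the $\sup_i$ in the pricing-error display does not follow.

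The detector and $\hat\lambda$ pieces are fine. The regime $s_t$ is common to all firms, so $\{\hat s_t\ne s_t\}$ is a single event, and boundedness of $\hat f^\beta$ is not even needed for $o_p(1)$. You should, however, apply the same $\pi_T\to0$ argument to the misclassified periods that contaminate the within-regime training sample.
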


\begin{theorem}[Disclosure incentives sharpen estimation]
\label{thm:dim-precision}
Under (A1)--(A7) there is a constant $C > 0$, independent of $i$ and $t$,
such that for the kernel (or sieve) implementation with bandwidth $h$ and
$d = \dim(z)$,
\[
  \mathbb{E}\bigl\| \hat{\beta}_{i,t} - \beta_{i,t} \bigr\|^2
  \;\le\;
  C \Bigl( h^{2} + \bigl(N_{s} h^{d}\bigr)^{-1}
           + \tau^2_c(\mathrm{DIM}_{i,t}) + \pi_T \Bigr).
\]
For any channel $c$ satisfying (A7), the corresponding contribution to
beta-estimation error is non-increasing in the firm's disclosure incentive;
strictly decreasing wherever $\tau^2_c$ is strictly decreasing. Empirically
we find this monotonicity most cleanly for disclosure clarity itself, while
risk-feature ambiguity is attribute-dependent.
\end{theorem}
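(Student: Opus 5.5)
The plan is to decompose the error of the plug-in estimator $\hat{\beta}_{i,t} = \hat{f}^{\beta}(R^{\mathrm{LLM}}_{i,t}, \hat{s}_t)$ into three pieces by adding and subtracting intermediate quantities. Write $\tilde z_{i,t} = g^{-1}(R^{\mathrm{LLM}}_{i,t})$ for the channel-inverted characteristic; this is well defined because (A4) makes $g$ strictly monotone componentwise, and I treat $\hat f^{\beta}$ as acting on $\tilde z$. Then
\[
\hat{\beta}_{i,t} - \beta_{i,t}
= \underbrace{\bigl[\hat f^{\beta}(\tilde z_{i,t},\hat s_t) - \hat f^{\beta}(\tilde z_{i,t}, s_t)\bigr]}_{\text{(I) detector}}
+ \underbrace{\bigl[\hat f^{\beta}(\tilde z_{i,t}, s_t) - f^{\beta}(\tilde z_{i,t}, s_t)\bigr]}_{\text{(II) sampling}}
+ \underbrace{\bigl[f^{\beta}(\tilde z_{i,t}, s_t) - f^{\beta}(z_{i,t}, s_t)\bigr]}_{\text{(III) channel}} .
\]
Using $(a+b+c)^2 \le 3(a^2+b^2+c^2)$, it suffices to bound the second moment of each piece by a constant times $\pi_T$, $h^2 + (N_s h^d)^{-1}$, and $\tau_c^2(\mathrm{DIM}_{i,t})$, respectively.

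\textbf{Term (III).} By (A3), $\|f^{\beta}(\tilde z,s)-f^{\beta}(z,s)\| \le L\|\tilde z - z\|$. If $g^{-1}$ is Lipschitz on the range of $g$ (I would add this as the quantitative form of the monotonicity in (A4), for instance a derivative of $g$ bounded away from zero on compact $\mathcal Z$), then $\mathbb E\|\tilde z - z\|^2 \le L_g^2\,\mathbb E\|\eta_{i,t}\|^2 = L_g^2 \tau^2_{i,t}$. Specialising to the channel $c$ gives $\tau_c^2(\mathrm{DIM}_{i,t})$.

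\textbf{Term (I).} It vanishes on $\{\hat s_t = s_t\}$. On the complement it is bounded by $2\sup_{z,s}\|\hat f^{\beta}(z,s)\|$. This requires a uniformly bounded estimator: I would truncate $\hat f^{\beta}$ at a level $M$ exceeding $\sup\|f^{\beta}\|$, which is finite by Lipschitz continuity on a bounded $\mathcal Z$. Then $\mathbb E\|(\mathrm I)\|^2 \le 4M^2\pi_T$ by (A5). If $\hat s_t$ depends on the sample, I would use a Cauchy--Schwarz-free version of the same bound, since boundedness suffices.

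\textbf{Term (II) is the main obstacle.} I would take $\hat f^{\beta}(\cdot,s)$ to be a local-constant (Nadaraya--Watson type) kernel least-squares regression of $r$ on $f$, using within-regime observations weighted by $K_h(\tilde z_j - z)$. The goal is the standard conditional-MSE bound: a bias of order $h$ from Lipschitz continuity (A3), hence $h^2$ in squared error, and a variance of order $(N_s h^d)^{-1}$.

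The variance step needs three inputs. First, the local design matrix $\frac{1}{N_s}\sum K_h f_j f_j^\top$ must concentrate around a multiple of $\Sigma_f(s)$, which is non-singular by (A2), so its inverse is bounded with high probability; fourth moments and $\beta$-mixing (A6) control the deviation. Second, the score $\frac1{N_s}\sum K_h f_j\varepsilon_j$ must have variance $O((N_s h^d)^{-1})$; here the summable mixing coefficients bound the covariance sum by a constant times the i.i.d.\ variance. Third, the density of $\tilde z$ must be bounded above and below on $\mathcal Z$, which I would need to assume.

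The subtlety is that the regression uses the noisy $\tilde z_j$ as the smoothing variable rather than $z_j$. I handle this by treating $f^{\beta}(\tilde z,s)$ as the target function in (II), which is why (III) is separated out. Within the estimation sample, the errors-in-variables effect adds a bias controlled in turn by $L L_g \tau$ terms absorbed into the same channel bound. The constant $C$ collects $L, L_g, M$, the eigenvalue bound on $\Sigma_f(s)^{-1}$, $\sigma_i^2$ (which must be bounded uniformly in $i$), and the mixing sum, none of which depend on $i$ or $t$.

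\textbf{Monotonicity.} The monotonicity claims then follow directly. The only $\mathrm{DIM}$-dependent term in the bound is $C\tau_c^2(\mathrm{DIM}_{i,t})$. By (A7) this term is non-increasing in $\mathrm{DIM}$, and it is strictly decreasing wherever $\tau_c^2$ is, since $C>0$ does not depend on $\mathrm{DIM}$.
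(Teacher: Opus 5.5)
The paper gives no proof of this theorem. Where it invokes it (e.g.\ in the sketch for Proposition~\ref{prop:two-channel}), it simply reads monotonicity off the $\tau^2_c$ term via (A7), and the evident intended route is your three-way split into detector, sampling and channel error. Your terms (I) and (III) and your monotonicity step are fine, once $g^{-1}$ is extended off the range of $g$ (e.g.\ by first projecting $R^{\mathrm{LLM}}$ onto that range).

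\textbf{The gap is in (II)}, at the sentence claiming the errors-in-variables effect in the estimation sample is ``absorbed into the same channel bound.'' Your kernel weights use the noisy $\tilde z_j$ of the \emph{training} firm-dates. So the local least-squares fit at $x=\tilde z_{i,t}$ converges (as $h\to 0$, $N_s\to\infty$) to a design-weighted version of $\mathbb{E}[f^{\beta}(z_j,s)\mid \tilde z_j=x]$, not to $f^{\beta}(x,s)$. The discrepancy is of order $L L_g\bar\tau_{\mathrm{win}}$, where $\bar\tau_{\mathrm{win}}$ is the reading noise of the \emph{other} observations in the kernel window. That noise is governed by their disclosure scores, not by $\mathrm{DIM}_{i,t}$, so it is not bounded by $C\tau^2_c(\mathrm{DIM}_{i,t})$ with $C$ free of $i$ and $t$.

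Concretely, take a single channel with $f^{\beta}(\cdot,s)$ linear, a training panel read with noise variance at least $\tau_0^2>0$, and a target with $\tau^2_c(\mathrm{DIM}_{i,t})=0$, which (A7) permits. Your estimator then converges to the attenuated slope of classical errors-in-variables. At any target $z_{i,t}$ away from the cross-sectional mean its error stays bounded away from zero, while your right-hand side $C\bigl(h^2+(N_sh^d)^{-1}+\pi_T\bigr)$ tends to zero. No sharper estimate fixes this; you need an extra ingredient, for example one of:
\begin{itemize}
\item a noise floor $\tau^2_c(1)>0$ plus a uniform bound $\bar\tau^2$ on training noise, so the window term is at most $L^2L_g^2(\bar\tau^2/\tau^2_c(1))\,\tau^2_c(\mathrm{DIM}_{i,t})$ and can be absorbed into $C$;
\item fitting $\hat f^{\beta}$ on de-noised characteristics (a human-coded subsample, many ensemble reads as in Proposition~\ref{prop:ensemble}, or a deconvolution kernel at a slower rate);
\item carrying a separate training-noise term in the bound.
\end{itemize}
The monotonicity conclusion survives each option, since the added term does not depend on $\mathrm{DIM}_{i,t}$.

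Two smaller points also need repair. First, in (III) you obtain $L^2L_g^2\,\mathbb{E}\|\eta_{i,t}\|^2$, which sums noise over all feature components. But (A7) ties only the designated clarity channel to DIM, and (A7b) explicitly lets risk-feature channels escape it, so ``specialising to the channel $c$'' is not a valid step. Since $g$ acts componentwise, write the term as $L^2\sum_k L_{g,k}^2\tau_k^2$ and apply (A7) to the component it covers; that is what the statement's ``corresponding contribution'' means. Second, the regime-$s$ training sample must be selected with $\hat s_j$, not the unobserved $s_j$, so detector errors also contaminate (II); your (I) handles only the target's own label. That contamination can be absorbed into the $\pi_T$ term under mild extra conditions (bounded $f^{\beta}$, moments of $f$, $\Pr(s_t=s)$ bounded away from zero), but it has to be argued.
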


\begin{corollary}[Transparency reduces information asymmetry]
\label{cor:transparency}
Let investor demand for firm $i$ be any decision rule that is continuous in
$\hat{\beta}_{i,t}$, and define the information-asymmetry wedge as the
dispersion across investors of conditional risk assessments
$\hat{\beta}_{i,t}^{\top}\lambda(s_t)$ induced by heterogeneous private
estimates. Under the conditions of Theorem~\ref{thm:dim-precision}, the
wedge is bounded by a non-decreasing function of
$\tau^2_c(\mathrm{DIM}_{i,t})$: richer disclosure (higher DIM, e.g.
machine-readable filings, structured guidance, open transcripts) tightens
the component of investor disagreement carried by channel $c$ and therefore
shrinks the corresponding asymmetry wedge at rate $\tau^2_c$. In
particular, a mean-preserving improvement in disclosure quality weakly
lowers cross-investor variance for channels that satisfy (A7); channels
whose ambiguity reflects newly revealed complexity are measured and reported
separately.
\end{corollary}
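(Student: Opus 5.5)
The plan is to reduce the wedge to a second-moment statement about each investor's estimation error and then apply Theorem~\ref{thm:dim-precision}. Model investor $j$'s private estimate as $\hat\beta^{(j)}_{i,t} = \hat f^{\beta,(j)}(R^{\mathrm{LLM},(j)}_{i,t},\hat s^{(j)}_t)$. Each investor runs the kernel or sieve implementation of Theorem~\ref{thm:dim-precision} on their own sample and reads the disclosure through their own draw of the channel noise $\eta^{(j)}_{i,t}$. Channel $c$ and its variance $\tau^2_c(\mathrm{DIM}_{i,t})$ are common to all investors; only the realisations differ.

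\emph{Step 1: define the wedge.} I define it as
\[
W_{i,t}=\operatorname{Var}_j\bigl(\hat\beta^{(j)\top}_{i,t}\lambda(s_t)\bigr),
\]
or, to avoid moment issues, as the expected squared pairwise difference over two independent investors $j,k$. Since $\lambda(s_t)$ is fixed given the regime, centering each estimate at the true $\beta_{i,t}$ gives
\[
W_{i,t}\le \mathbb{E}\bigl[(\hat\beta^{(j)}_{i,t}-\beta_{i,t})^{\top}\lambda(s_t)\bigr]^2\le \|\lambda(s_t)\|^2\,\mathbb{E}\|\hat\beta^{(j)}_{i,t}-\beta_{i,t}\|^2 .
\]
This uses only the fact that a variance is at most the mean square about any constant, together with Cauchy--Schwarz. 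With $K$ regimes, $\bar\lambda=\max_s\|\lambda(s)\|$ is finite.

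\emph{Step 2: apply Theorem~\ref{thm:dim-precision}.} Substituting its bound yields
\[
W_{i,t}\le C\bar\lambda^2\bigl(h^2+(N_sh^d)^{-1}+\pi_T\bigr)+C\bar\lambda^2\,\tau^2_c(\mathrm{DIM}_{i,t}).
\]
The first bracket does not depend on DIM. The channel-$c$ component of the wedge is therefore bounded by $\phi(\tau^2_c)=C\bar\lambda^2\tau^2_c$, which is linear and hence non-decreasing. By (A7), $\phi(\tau^2_c(\mathrm{DIM}))$ is non-increasing in DIM, which is the claimed rate. Because $C$ is independent of $i$ and $t$, the bound is uniform across firms.

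\emph{Step 3: continuous demand and mean-preserving improvements.} If demand $D(\hat\beta)$ is merely continuous, a quantitative bound requires more than continuity. I would restrict to a compact set of $\beta$ values and assume Lipschitz continuity there (or use a modulus of continuity $\omega$). The wedge in demands is then bounded by $\omega$ applied to the bound above, and composing non-decreasing functions keeps it non-decreasing. For a mean-preserving improvement in disclosure quality, model the noise as a mean-preserving contraction of $\eta$. Its second moment then weakly falls, and Step 2 gives a weakly lower cross-investor variance. Channels that violate (A7) are excluded by hypothesis.

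\emph{Main obstacle.} The hard part is isolating ``the component carried by channel $c$.'' The bound in Theorem~\ref{thm:dim-precision} is an upper bound on a sum, not an exact decomposition, so monotonicity of the bound does not imply that the wedge itself is monotone. I would first try to make the decomposition exact. Using (A4), $\eta$ is independent of $\varepsilon$ and $f$, so the cross term between channel error and sampling/detector error has zero mean to first order. A first-order expansion of $\hat f^\beta$ in its first argument then gives a variance contribution of the form $\nabla\hat f^{\beta}\,\Sigma_\eta\,\nabla\hat f^{\beta\top}$. This term is monotone in $\tau^2_c$ given bounded gradients, which follow from the Lipschitz condition in (A3) and strict monotonicity of $g$ in (A4). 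If linearisation proves too delicate, I would state the corollary at the level of the upper bound only, which is what the statement literally asserts.
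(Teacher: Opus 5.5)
Your Steps 1--2 match the paper's reasoning. The paper gives no separate proof of the corollary; it justifies it only by noting that the Theorem~\ref{thm:dim-precision} bound binds at a smaller $\tau^2_c$ for channels satisfying (A7). Your reduction makes this precise: $\operatorname{Var}$ is at most the mean square about $\beta_{i,t}^{\top}\lambda(s_t)$, then Cauchy--Schwarz, then the theorem. As you correctly flag, in both your treatment and the paper's the ``weakly lowers cross-investor variance'' claim holds only at the level of the bound.

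The optional linearised refinement is not needed, and as written it overreaches. Strict monotonicity of $g$ does not make $g^{-1}$ Lipschitz (take $g(z)=z^3$ near $0$). Also, (A3) controls the gradient of $f^{\beta}$, not of the kernel estimate $\hat f^{\beta}$, so bounded gradients would have to be assumed separately.
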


\paragraph{Reading of Corollary~\ref{cor:transparency}.} This is the
formal version of the policy claim: when issuers and data vendors make
risk-relevant text \emph{available and machine-readable}, the bound in
Theorem~\ref{thm:dim-precision} binds at a smaller $\tau^2_c$ for channels
where disclosure clarity reduces reader disagreement. Transparency is a
public good whose value is measurable as estimator precision; channels where
richer text reveals more economic complexity remain observable rather than
assumed away. Section~\ref{sec:policy} develops the empirical counterpart
(T3/A7) and the industry recommendation.

\begin{proposition}[Two channels of DIM and their identification]
\label{prop:two-channel}
Suppose, in addition to a channel satisfying (A7), that disclosure intensity responds to the
firm's risk environment: write
$\mathrm{DIM}_{i,t} = \mu_i + \delta_{i,t}$, where the firm component
$\mu_i$ is increasing in the firm's structural beta instability
$\nu_i := \mathbb{E}\,\|\beta_{i,t+1} - \beta_{i,t}\|$ (complex,
fast-changing firms must say more), while the innovation $\delta_{i,t}$
operates through that channel only, i.e.
$\tau^2_{c,i,t} = \tau^2_c(\mu_i + \delta_{i,t})$ with $\tau^2_c$
non-increasing. Then:
(i) the cross-sectional (between-firm) covariance of DIM with realised
beta drift can be \emph{positive} even though (A7) holds --- the
composition effect of $\mu_i$;
(ii) the within-firm projection of beta-estimation error on
$\delta_{i,t}$ (a firm-fixed-effects regression of error on DIM)
identifies the precision channel and is non-positive under (A1)--(A7);
(iii) consequently, disclosure-policy evaluations based on
cross-sectional DIM comparisons are confounded by $\nu_i$ and should be
conducted within issuer.
\end{proposition}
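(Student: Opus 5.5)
The plan is to treat the three parts as a standard between/within decomposition of a covariance. I take as given the error bound of Theorem~\ref{thm:dim-precision}, and I write the beta-estimation error as $e_{i,t} := \mathbb{E}\|\hat\beta_{i,t}-\beta_{i,t}\|^2$. I also write the realised drift as $D_{i,t} := \|\beta_{i,t+1}-\beta_{i,t}\|$, so that $\nu_i = \mathbb{E} D_{i,t}$.

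For part (i) I only need an example, or a sign argument, showing that the between-firm covariance $\mathrm{Cov}(\mu_i,\nu_i)$ is positive. Apply the law of total covariance across firms:
\[
\mathrm{Cov}(\mathrm{DIM}_{i,t},D_{i,t}) = \mathrm{Cov}(\mu_i,\nu_i) + \mathbb{E}\bigl[\mathrm{Cov}(\delta_{i,t},D_{i,t}\mid i)\bigr].
\]
Because $\mu_i=m(\nu_i)$ with $m$ increasing, Chebyshev's association inequality (or Harris--FKG in one dimension) gives $\mathrm{Cov}(m(\nu_i),\nu_i)\ge 0$. The inequality is strict whenever $\nu_i$ is non-degenerate across firms and $m$ is strictly increasing on its support. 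The within term depends only on $\delta$. Since $\delta$ operates only through $\tau^2_c$, I will take it to be conditionally uncorrelated with drift given $i$, which makes the within term zero. The total covariance is then positive, yet nothing in this argument contradicts (A7), because (A7) restricts only $\tau_c^2$ and not the map $\nu\mapsto\mu$.

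For part (ii), fix the firm $i$. Within the firm, $\mu_i$ is constant, so the dependence of $e_{i,t}$ on DIM runs entirely through $\tau^2_c(\mu_i+\delta_{i,t})$. The terms $h^2$, $(N_sh^d)^{-1}$ and $\pi_T$ do not depend on $\delta$. Here I will need the slightly stronger statement that $e_{i,t}$ equals a $\delta$-free part plus a non-decreasing function of $\tau^2_c$, rather than merely being bounded above by one. I will argue this from the channel structure (A4): the noise $\eta$ enters additively and independently, so $e_{i,t}$ equals a term free of $\tau$ plus a term monotone in $\tau^2_c$. The within-firm projection coefficient is
\[
\mathrm{Cov}(e_{i,t},\delta_{i,t}\mid i)/\mathrm{Var}(\delta_{i,t}\mid i).
\]
Since $\delta\mapsto e$ is then a non-increasing function of the single variable $\delta$, Chebyshev's inequality applied once more gives a covariance $\le 0$. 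Averaging the within-firm coefficients with positive weights $\mathrm{Var}(\delta\mid i)$, as the fixed-effects estimator does, preserves the sign.

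Part (iii) follows by combining (i) and (ii). The pooled cross-sectional slope mixes the between component, which can be positive and is driven by $\nu_i$, with the within component, which is non-positive. Its sign is therefore not identified by (A7). The fixed-effects estimator differences out $\mu_i$, and with it $\nu_i$, leaving only the precision channel.

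The main obstacle is part (ii). Theorem~\ref{thm:dim-precision} gives only an upper bound, and an upper bound that is monotone in $\tau_c^2$ does not make the error itself monotone. I will first try to establish an exact additive decomposition of the error under (A4), using the independence of $\eta$ from $\varepsilon$ and $f$ together with a first-order linearisation of $\hat f^\beta$ in its argument. If that decomposition proves too strong to obtain, I will fall back to stating (ii) for the bound-implied error proxy.
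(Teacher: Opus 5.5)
Your proof has the same skeleton as the paper's. For (i) you use the between/within covariance split, for (ii) firm fixed effects remove $\mu_i$, and (iii) is read off from the two. The differences are in rigor, and most of them favour you.

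In (i), the paper says the term $\mathrm{Cov}(\mu_i,\nu_i)$ is positive ``by assumption'' and does not address the within term. You instead derive the sign from Chebyshev's association inequality and make the within term vanish explicitly. You also need the normalisation $\mathbb{E}[\delta_{i,t}\mid i]=0$, so that the between term is exactly $\mathrm{Cov}(\mu_i,\nu_i)$.

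In (ii), the paper's sketch only argues that the \emph{bound} of Theorem~\ref{thm:dim-precision} is monotone in $\tau^2_c$. You correctly flag that step as insufficient. Your fallback is therefore what the paper actually proves, and your additive decomposition is a genuine strengthening. If you pursue it, three things limit what it delivers:
\begin{itemize}
\item If $\hat f^{\beta}$ is not affine, the exact error need not be monotone in the noise scale. Adding noise to a biased, curved predictor can lower its error. You therefore get the sign only to first order in $\tau_c$, or exactly only when $\hat f^{\beta}$ is affine in the channel-$c$ argument.
\item The linearised term $\mathbb{E}\|J\eta\|^2$, with $J$ the Jacobian of $\hat f^{\beta}$, is a function of $\tau^2_c$ alone only if the channel-$c$ noise covariance scales with a fixed shape.
\item You must read ``operates through that channel only'' as conditional independence of $\delta_{i,t}$ from $(z_{i,t},s_t)$ given $i$. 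Otherwise the realised counterpart of the $(N_{s_t}h^d)^{-1}$ term, or the $z_{i,t}$-dependent factor $J$, can covary with $\delta$ and flip the sign. An example is a firm that discloses more in thinly sampled regimes.
\end{itemize}

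Finally, both you and the paper pass from beta \emph{drift} in (i) to estimation \emph{error} in (ii) without a bridge. To make ``confounded by $\nu_i$'' precise for an error-on-DIM regression in (iii), you need firm-mean estimation error to be increasing in $\nu_i$.
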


\begin{proof}[Proof sketch]
(i) is immediate from
$\mathrm{Cov}(\mathrm{DIM}, \|\Delta\beta\|) =
\mathrm{Cov}(\mu_i, \nu_i) + \mathrm{Cov}(\delta, \|\Delta\beta\| \mid i)$
with the first term positive by assumption.
(ii) Within firm, $\mu_i$ differences out; the remaining variation in
$\tau^2_c$ is monotone in $\delta$, and by
Theorem~\ref{thm:dim-precision} the error bound is monotone in $\tau^2_c$.
(iii) follows from (i)--(ii). \end{proof}

\begin{theorem}[Adaptive combination: never worse than price history,
better when it breaks]
\label{thm:combination}
Let $\hat\beta^{\mathrm{roll}}_{i,t}$ be the rolling OLS beta with
conditional variance $V_{i,t}$ (estimable from the OLS sandwich), and let
$\hat\beta^{\mathrm{txt}}_{i,t}$ be the regime-conditional channel
estimator with MSE $M_{i,t}$ bounded by Theorem~\ref{thm:dim-precision}
and estimable on a validation window. Define
\[
  \hat\beta^{\ast}_{i,t}
  = \omega_{i,t}\, \hat\beta^{\mathrm{txt}}_{i,t}
  + (1 - \omega_{i,t})\, \hat\beta^{\mathrm{roll}}_{i,t},
  \qquad
  \omega_{i,t} = \frac{\hat V_{i,t}}{\hat V_{i,t} + \hat M_{i,t}} .
\]
If the component errors are uncorrelated conditional on the information
set and the plug-ins satisfy
$\hat V / V \overset{p}{\to} 1$, $\hat M / M \overset{p}{\to} 1$, then
\[
  \mathrm{MSE}\bigl(\hat\beta^{\ast}\bigr)
  \;\le\; \min\!\bigl\{ \mathrm{MSE}(\hat\beta^{\mathrm{roll}}),\,
                        \mathrm{MSE}(\hat\beta^{\mathrm{txt}}) \bigr\}
  + o_p(1).
\]
In particular $\omega \to 0$ for long-history firms in stable regimes
(the estimator collapses to the rolling beta) and $\omega \to 1$ when
price history is short, volatile, or straddles a detected regime break
($V$ large) --- which is exactly when rolling betas are known to fail.
\end{theorem}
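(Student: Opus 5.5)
The plan is to reduce everything to the classical inverse-variance combination of two uncorrelated unbiased-or-not estimators, first at the oracle weights and then with plug-ins. Write the component errors as $e^{\mathrm{txt}} = \hat\beta^{\mathrm{txt}}_{i,t} - \beta_{i,t}$ and $e^{\mathrm{roll}} = \hat\beta^{\mathrm{roll}}_{i,t} - \beta_{i,t}$. Conditional on the information set $\mathcal{F}_{i,t}$, let $M = \mathbb{E}[\|e^{\mathrm{txt}}\|^2 \mid \mathcal{F}]$ and $V = \mathbb{E}[\|e^{\mathrm{roll}}\|^2 \mid \mathcal{F}]$. I will treat $V$ as the full conditional MSE of the rolling beta, so that any staleness bias after a break is included; the sandwich variance alone would miss it. For any weight $\omega$ that is $\mathcal{F}$-measurable, the affine structure gives $\hat\beta^\ast - \beta = \omega e^{\mathrm{txt}} + (1-\omega) e^{\mathrm{roll}}$. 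The conditional uncorrelatedness assumption kills the cross term, so
\[
  \mathbb{E}\bigl[\|\hat\beta^\ast - \beta\|^2 \mid \mathcal{F}\bigr]
  = \omega^2 M + (1-\omega)^2 V =: \phi(\omega).
\]

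\textbf{Oracle step.} The function $\phi$ is a convex quadratic in $\omega$. It is minimized at $\omega^o = V/(V+M)$, with value $\phi(\omega^o) = VM/(V+M)$. This value is at most $\min\{V, M\} = \min\{\phi(0), \phi(1)\}$, because $VM/(V+M) \le V$ if and only if $M \le V+M$, and symmetrically. Taking unconditional expectations and using $\mathbb{E}\min \le \min \mathbb{E}$ yields the oracle inequality with no remainder.

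\textbf{Plug-in step.} Write $\omega = \hat V/(\hat V+\hat M)$. Since $\hat V/V \to 1$ and $\hat M/M \to 1$ in probability, the continuous mapping theorem applied to $(a,b)\mapsto a/(a+b)$ in ratio form gives $\omega - \omega^o \overset{p}{\to} 0$. Indeed $\omega = (\hat V/V)V/((\hat V/V)V + (\hat M/M)M)$, which is a ratio-perturbation of $\omega^o$ bounded uniformly in $(V,M)$. Expanding $\phi$ around its minimum gives $\phi(\omega) - \phi(\omega^o) = (\omega-\omega^o)^2 (V+M)$.

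\textbf{The main obstacle} is that $\hat\omega$ is estimated, possibly from the same data, so the cross term no longer vanishes exactly. I would handle this with sample splitting: $\hat M$ is already estimated on a validation window, and $\hat V$ should be computed from data disjoint from the errors, or the argument should rely on the assumption that uncorrelatedness holds conditional on $\mathcal{F}$, which contains the plug-ins. The $o_p(1)$ claim also needs $(V+M)(\omega-\omega^o)^2 = o_p(1)$, which requires $V+M = O_p(1)$. That is supplied by Theorem~\ref{thm:dim-precision} for $M$ and by finite second moments of the rolling estimator under (A1)--(A2) for $V$. If $V$ can diverge on very short histories, I would instead use the relative bound $\phi(\omega)/\phi(\omega^o) \to 1$, which follows from $\phi(\omega)/\phi(\omega^o) - 1 = (\omega-\omega^o)^2 (V+M)^2/(VM)$. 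This quantity is $o_p(1)$ after rewriting in terms of $\hat V/V$ and $\hat M/M$: a direct computation gives $(V+M)^2(\omega-\omega^o)^2/(VM) = VM(a-b)^2/(aV+bM)^2$, with $a = \hat V/V$ and $b = \hat M/M$, and this is bounded by $(a-b)^2/(4\min(a,b)^2) \to 0$. Multiplying the relative bound by $\phi(\omega^o)\le M = O_p(1)$ then gives the additive $o_p(1)$ form.

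\textbf{Limits of $\omega$.} For long-history firms in stable regimes, $V = O(1/T_{\mathrm{hist}}) \to 0$. Meanwhile $M$ is bounded below by the irreducible channel and detector terms from Theorem~\ref{thm:lower-bound-tau}, so $\omega^o \to 0$. When history is short or straddles a break, $V$ grows through small-sample variance or the staleness bias while $M$ stays bounded by Theorem~\ref{thm:dim-precision}, so $\omega^o \to 1$. By the consistency of $\omega$ shown in the plug-in step, the same limits transfer to $\omega$.
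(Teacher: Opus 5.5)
Your proposal is correct and follows the paper's own argument: the conditional MSE $\omega^2M+(1-\omega)^2V$ is minimized at $\omega^{o}=V/(V+M)$ with value $VM/(V+M)\le\min(V,M)$, and consistency of the plug-in weights then gives the $o_p(1)$ remainder. Your plug-in step makes rigorous what the paper dismisses as ``by continuity,'' by requiring the weights to be measurable with respect to the conditioning set and by proving the relative bound $VM(a-b)^2/(aV+bM)^2\le (a-b)^2/(4\min(a,b)^2)$. One minor caveat on the informal $\omega\to0$ claim: Theorem~\ref{thm:lower-bound-tau} is a two-point minimax bound, and its $\pi_T$ term vanishes, so it does not by itself keep a given firm's $M_{i,t}$ away from zero; you need that as a direct assumption about the text estimator's channel error.
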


\begin{proof}[Proof sketch]
With uncorrelated errors, the MSE of the convex combination is
$\omega^2 M + (1-\omega)^2 V$, minimised at
$\omega^{\ast} = V/(V+M)$ with minimum $VM/(V+M) \le \min(V, M)$.
Consistent plug-ins give $\omega \to \omega^{\ast}$ and the $o_p(1)$
term by continuity. Correlated errors add a cross term bounded by
Cauchy--Schwarz; the dominance becomes
$\mathrm{MSE}(\hat\beta^{\ast}) \le \min(V, M)(1 + \rho)$ for error
correlation $\rho$, still strictly below the worse component whenever
$\rho < \min(V,M)/\max(V,M)$.
\end{proof}

\paragraph{Why Theorem~\ref{thm:combination} is the practitioner
statement.} It removes the adoption risk of every text-based beta to
date: the user never gives up the rolling beta's performance on stable
large caps (the weight goes to zero there automatically), and the weight
$\omega_{i,t}$ itself is a published diagnostic --- ``how much should you
trust price history for this name, today'' --- computable from
observables. The estimator inherits Theorem~\ref{thm:dim-precision}'s
error budget through $\hat M$, so disclosure quality and detector
quality price directly into the blend.

\section{Text-spanned factors: extending the basis (FF5+T)}
\label{sec:ff5t}

The factor basis itself need not stop at FF5. Any LLM-measured firm
characteristic $c_{i,t}$ (scored through the channel of
Proposition~\ref{prop:ensemble}) defines a candidate long--short factor
$g^{c}_t$: the return of high-$c$ minus low-$c$ tercile portfolios,
rebalanced as scores update. Our pre-registered set: \textbf{DMD}
(high-disclosure minus low-disclosure, from DIM), \textbf{forecast
accuracy} (realised guidance accuracy, high minus low),
\textbf{coverage} (transcript/filing richness), and the two
Booth-lineage exposures, \textbf{political risk} and \textbf{climate
risk}, scored at firm level from the same documents. The augmented model
replaces $f_t$ by $(f_t, g_t) \in \mathbb{R}^{5+J}$ in
\eqref{eq:ff5}--\eqref{eq:beta}; Theorems~\ref{thm:beta-consistency}--%
\ref{thm:dim-precision} apply verbatim to the augmented basis provided
the per-regime second-moment condition (A2) holds for $(f, g)$, and a
text factor earns its place only if it survives the spanning test
against FF5 (priced premium not explained by the original five).

\begin{theorem}[Noisy-sort attenuation: disclosure quality scales
discoverable premia]
\label{thm:attenuation}
Let firms be sorted into a high--low factor on the score
$\bar{R}^{c} = c + \eta$, $\eta$ mean-zero with variance $\tau^2/K$
(ensemble channel), where the true characteristic $c$ has
cross-sectional variance $\sigma_c^2$ and carries a true premium
$\lambda_c$ per unit of $c$. Under elliptical cross-sectional score
distributions, the observed long--short premium satisfies
\[
  \lambda^{\mathrm{obs}}_c \;=\; \lambda_c \cdot
  \underbrace{\frac{\sigma_c^2}{\sigma_c^2 + \tau^2/K}}_{\text{reliability } \rho}
  \;+\; o(1),
\]
i.e.\ noisy reading attenuates the measured premium by exactly the
score reliability $\rho$. Consequently (i) the de-attenuated premium
$\lambda^{\mathrm{obs}}/\hat\rho$ is estimable because
Proposition~\ref{prop:ensemble} measures $\tau^2$; (ii) a text factor
can fail a spanning test purely through poor disclosure
($\rho$ small), so factor-zoo rejections of text characteristics are
uninterpretable without a reliability estimate; and (iii) as disclosure
improves for a channel satisfying (A7) ($\tau^2_c(\mathrm{DIM}) \downarrow$), previously invisible
premia become detectable --- a second, market-level channel through
which transparency creates measurable value
(complementing Corollary~\ref{cor:transparency}).
\end{theorem}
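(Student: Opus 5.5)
The plan is to reduce the high--low sort to a linear projection of returns on the noisy score and then read off the classical errors-in-variables attenuation. Model cross-sectional returns as $r_i = \lambda_c c_i + u_i$, with $u_i$ orthogonal to $c_i$ and to $\eta_i$. The noise $\eta$ is independent of returns given $c$, by the analogue of (A4), since the reader noise is independent of $\varepsilon$ and $f$. The observed premium is the spread in average returns between the top and bottom groups of $\bar R^c$. It is therefore
\[
\lambda_c\bigl(\mathbb{E}[c\mid \bar R^c\in\text{top}] - \mathbb{E}[c\mid \bar R^c\in\text{bottom}]\bigr),
\]
normalised per unit of score spread, in the same way the true premium is normalised per unit of $c$.

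The key step is to evaluate $\mathbb{E}[c\mid \bar R^c]$. Under the elliptical assumption on the joint cross-sectional law of $(c,\eta)$, and hence of $(c,\bar R^c)$, the conditional expectation is linear:
\[
\mathbb{E}[c\mid \bar R^c] = \mathbb{E}c + \frac{\mathrm{Cov}(c,\bar R^c)}{\mathrm{Var}(\bar R^c)}\,(\bar R^c-\mathbb{E}\bar R^c).
\]
Because $\eta$ is mean-zero and uncorrelated with $c$, the slope equals $\sigma_c^2/(\sigma_c^2+\tau^2/K)=\rho$. The variance $\tau^2/K$ is taken from the ensemble channel of Proposition~\ref{prop:ensemble}.

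Averaging this linear conditional mean over any score-defined group (terciles, say) gives a true-characteristic spread equal to $\rho$ times the score spread. The premium per unit of score is therefore $\lambda_c\rho$. The $o(1)$ term collects three sources of error:
\begin{itemize}
\item finite-cross-section estimation of the group means, handled by a law of large numbers across firms;
\item replacing population terciles by empirical quantiles;
\item idiosyncratic return noise averaging out within portfolios, using (A1).
\end{itemize}

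The consequences then follow directly.
\begin{itemize}
\item[(i)] Proposition~\ref{prop:ensemble} supplies a consistent $\hat\tau^2$, and $\sigma_c^2$ is recovered as $\mathrm{Var}(\bar R^c)-\hat\tau^2/K$. Hence $\hat\rho\overset{p}{\to}\rho$, and the continuous mapping theorem gives $\lambda^{\mathrm{obs}}/\hat\rho\overset{p}{\to}\lambda_c$ provided $\rho$ is bounded away from zero.
\item[(ii)] A spanning-test $t$-statistic scales with $\lambda^{\mathrm{obs}}=\rho\lambda_c$. For fixed $\lambda_c\neq0$ one can therefore choose $\tau^2$ large enough that the test has power near size, so failing to reject is uninformative without $\hat\rho$.
\item[(iii)] $\rho$ is decreasing in $\tau^2$. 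Composing with (A7), $\tau_c^2(\mathrm{DIM})$ is non-increasing, so $\rho$ is non-decreasing in DIM. Detection power, which is monotone in $|\rho\lambda_c|$, rises accordingly.
\end{itemize}

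The main obstacle is the key step itself: the linearity of $\mathbb{E}[c\mid\bar R^c]$, and hence the exactness of ``attenuation equals reliability'' for a discrete tercile sort rather than for an OLS slope. Ellipticity of the marginal score distribution alone is not sufficient. What is needed is joint ellipticity of $(c,\eta)$ with a common generator, for example jointly Gaussian. I would state that as the interpretation of the hypothesis. If that fails, I would fall back on proving the result for the OLS characteristic-slope premium, where attenuation by $\rho$ holds exactly under mere uncorrelatedness. The tercile version then holds only up to a correction that vanishes as the score distribution approaches joint ellipticity.
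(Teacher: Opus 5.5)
Your proposal is correct and follows essentially the same route as the paper's sketch. Under ellipticity the conditional mean is the linear shrinkage $\mathbb{E}[c\mid\bar R^c]=\rho\bar R^c+(1-\rho)\bar c$, so score-sorted portfolios carry $\rho$ times their score spread in true characteristic, and the premium is linear in that spread. Your added care makes explicit two things the paper's sketch leaves implicit: it requires joint ellipticity of $(c,\eta)$ rather than ellipticity of the score alone, and it normalises the long--short return per unit of score spread. Without that normalisation, a raw tercile spread compared with a sort on the true $c$ would attenuate by $\sqrt{\rho}$ rather than $\rho$.
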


\begin{proof}[Proof sketch]
Classical errors-in-variables sorting: the expected true characteristic
conditional on the observed score is the linear shrinkage
$\mathbb{E}[c \mid \bar R^c] = \rho \bar R^c + (1-\rho)\bar c$ under
ellipticity; portfolio spreads in observed-score space therefore carry
$\rho$ times the spread in true-characteristic space, and the premium is
linear in the spread. The $o(1)$ collects tercile-boundary
misclassification, second order for continuous score densities.
\end{proof}

\paragraph{Worked example with measured noise.} On the pilot panel the
ensemble channel measures $\hat\tau = 0.052$ against a cross-sectional
DIM dispersion of $\hat\sigma_c = 0.152$: reliability
$\hat\rho = \hat\sigma_c^2 / (\hat\sigma_c^2 + \hat\tau^2/3) = 0.96$ at
$K = 3$ reads --- so a DMD factor built with this pipeline loses only
$\approx 4\%$ of its premium to reading noise, whereas a single-read
pipeline with the cross-model disagreement we measured for the weakest
feature ($\mathrm{mean}|\Delta| \approx 0.11$, implying
$\tau \approx 0.11$) would lose $\approx 34\%$. Reading quality is a
first-order determinant of which text factors the literature can find.

\begin{theorem}[Lower bound: the disclosure term is unavoidable]
\label{thm:lower-bound-tau}
Fix a regime $s$ and consider any estimator $\tilde\beta$ of
$\beta(z, s)$ that observes only $(r, f, R^{\mathrm{LLM}}, \hat s)$.
There exist a constant $c > 0$ and a pair of latent values
$z_0, z_1$ with $\|f^{\beta}(z_0,s) - f^{\beta}(z_1,s)\| \asymp
L \tau$ whose induced channel laws satisfy
$\mathrm{TV}\bigl(P_{z_0}, P_{z_1}\bigr) \le 1/2$, such that
\[
  \sup_{z \in \{z_0, z_1\}}
  \mathbb{E}\bigl\| \tilde\beta - \beta(z, s) \bigr\|^2
  \;\ge\; c\, \bigl( \tau^2 \wedge \mathrm{diam}^2 \bigr)
  + c'\,\pi_T .
\]
Hence the $\tau^2_c(\mathrm{DIM})$ and $\pi_T$ terms in
Theorem~\ref{thm:dim-precision} are not artifacts of the kernel method:
no estimator can remove them for the chosen channel, and improving
disclosure (or the detector) is the \emph{only} way to beat the floor.
\end{theorem}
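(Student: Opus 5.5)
The plan is a two-point (Le Cam) argument for the $\tau^2$ term and a separate two-point argument for the $\pi_T$ term. The two lower bounds are then combined, using $\max(a,b)\ge (a+b)/2$, which can be absorbed into the constants $c,c'$.

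\emph{Channel term.} Fix $s$ and work in the scalar-direction case along one coordinate of $z$. Take the channel noise Gaussian, $\eta\sim N(0,\tau^2 I)$, which is admissible under (A4). Choose $z_0,z_1\in\mathcal{Z}$ with $\|g(z_0)-g(z_1)\|=\tau$; strict monotonicity of $g$ makes this possible whenever $\tau$ is below the range of $g$, which is where the $\wedge\,\mathrm{diam}^2$ truncation enters. Pinsker then gives
\[
\mathrm{TV}(P_{z_0},P_{z_1})\le \sqrt{\tfrac12\mathrm{KL}}=\tfrac{\|g(z_0)-g(z_1)\|}{2\tau}\le \tfrac12 .
\]
The returns $(r,f)$ must be made uninformative about $z$ beyond $\beta$ itself. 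I would argue that, for a single firm-date, $r$ depends on $z$ only through $\beta^\top f$ plus $\varepsilon$. One either shows its contribution to TV is absorbed by taking the pair close enough, or notes that the statement concerns the channel laws $P_{z}$. For the latter, the adversary picks $f^\beta(\cdot,s)$ with slope $L$ along the segment and bounded below by a fraction of $L$, consistent with (A3), so that $\|f^\beta(z_0,s)-f^\beta(z_1,s)\|\asymp L\tau$.

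Le Cam's lemma with separation $\Delta=\|\beta_0-\beta_1\|$ then yields
\[
\sup_z\mathbb{E}\|\tilde\beta-\beta(z,s)\|^2\ge \tfrac{\Delta^2}{4}\cdot\tfrac{1-\mathrm{TV}}{2}\ge \tfrac{\Delta^2}{16}\gtrsim L^2\tau^2 .
\]
Absorbing $L$ into $c$ gives the $\tau^2\wedge\mathrm{diam}^2$ term.

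\emph{Detector term.} Pick two regimes $s\ne s'$ with $f^\beta(z,s)\ne f^\beta(z,s')$. Build a detector law in which, with probability $\pi_T$, $\hat s$ reports the other regime, and the data $(r,f,R^{\mathrm{LLM}})$ are made uninformative about which of the two is true. One way is to hold $z$ fixed, so $R^{\mathrm{LLM}}$ carries no information, and to swap factor laws on the misclassified event so that $r$ has the same conditional law. On that event any estimator must err by at least half the separation for one of the two hypotheses, giving $c'\pi_T$ with $c'\propto\|f^\beta(z,s)-f^\beta(z,s')\|^2/4$.

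\emph{Main obstacle.} I expect the main difficulty in both steps to be controlling the information in the returns $r$. The $r$ observations depend on the true $\beta$ through $\beta^\top f$, so with long histories they could in principle identify $\beta$ and defeat both bounds. I would handle this by restricting to the observation class named in the statement: a single cross-sectional draw, in the thin-history setting, with $\varepsilon$ Gaussian of variance $\sigma_i^2$. I would then show that the KL contribution $\Delta^2\mathbb{E}\|f\|^2/(2\sigma_i^2)$ is of order $L^2\tau^2$, so it can be absorbed by shrinking the pair separation by a constant factor, preserving $\mathrm{TV}\le 1/2$.
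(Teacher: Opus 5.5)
Your channel argument is essentially the paper's own Le Cam two-point construction: Gaussian noise at scale $\tau$, KL of order one, a loading gap $\asymp L\tau$, and the two-point risk inequality. Your constants check out.

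Where you improve on the paper is the return information. Le Cam needs the total variation of the \emph{full} observation law. The paper's sketch bounds only the channel laws $P_z$ and silently drops $(r,f)$, so of your two options only the first (absorbing the return contribution) is valid.

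Your thin-history restriction is necessary, not cosmetic. With a long own-firm history at persistent $z$, rolling OLS has error $O(\sigma_i^2/T)$ and no $\tau^2$ floor exists; this is the $V\to0$ case of Theorem~\ref{thm:combination}.

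Make the absorption explicit. Since $\|z_0-z_1\|\le\mathrm{diam}$, shrinking the separation by a factor $\kappa$ with $\kappa^2=(1+L^2\mathrm{diam}^2\,\mathbb{E}\|f\|^2/\sigma_i^2)^{-1}$ keeps the total KL at most $1/2$. Hence $c\asymp\kappa^2L^2$, which is uniform in $i$ only if $\sigma_i$ is bounded below.

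Separating in $g$-space also yields a gap $\asymp L\tau$ only if $g$ is continuous and bi-Lipschitz along the segment; strict monotonicity alone does not pin $\|z_0-z_1\|$ to scale $\tau$. Build this into the adversarial choice, as you already do for the lower modulus of $f^\beta$. (A3) does not supply that lower modulus, and the paper's ``Lipschitz lower bound'' assumes it tacitly.

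The detector step fails as written. The estimator sees $f$ together with $r$, so swapping factor laws on the misclassified event cannot change the conditional law $N(\alpha+\beta^\top f,\sigma_i^2)$ of $r$ given $f$. That law differs between $\beta_0=f^\beta(z,s)$ and $\beta_1=f^\beta(z,s')$. The joint laws of $(r,f)$ coincide only if $(\beta_0-\beta_1)^\top f$ is a.s.\ constant, i.e.\ the factor covariance is singular in that direction.

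Exact uninformativeness is unnecessary, though. Here is a repair:
\begin{itemize}
\item Keep $z$ and the factor law common to both regimes, and let $Q_0,Q_1$ be the laws of the firm-date data $(r,f,R^{\mathrm{LLM}})$ under the two regimes.
\item Let $\hat s$ err with probability $\pi_T\le 1/2$ independently of the data, giving joint laws $P_0,P_1$.
\item Splitting on the value of $\hat s$ yields $1-\mathrm{TV}(P_0,P_1)\ge 2\pi_T\bigl(1-\mathrm{TV}(Q_0,Q_1)\bigr)$.
\item With Gaussian $\varepsilon$ and $\Phi$ the standard normal c.d.f., $\mathrm{TV}(Q_0,Q_1)=\mathbb{E}_f\bigl[2\Phi\bigl(|(\beta_0-\beta_1)^\top f|/(2\sigma_i)\bigr)-1\bigr]<1$.
\item Your Le Cam inequality then gives the floor $\tfrac14\|\beta_0-\beta_1\|^2\,\pi_T\bigl(1-\mathrm{TV}(Q_0,Q_1)\bigr)$, which is the $c'\pi_T$ term.
\end{itemize}

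This term exists only because the true regime is unknown. The estimand is $f^\beta(z,s_t)$, and the supremum runs over $\{z_0,z_1\}\times\{s,s'\}$. Your max-$\ge$-average combination over that union is therefore the right reading, and it is what the paper's ``mixing the two-point family over regimes'' must mean. With $s$ fixed and known, as the statement literally reads, $\hat s$ carries no information about $z$. An estimator that ignores it then beats any $c'\pi_T$ floor once $\tau^2\ll\pi_T$.
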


\begin{proof}[Proof sketch]
Le Cam two-point argument on the channel. Choose $z_1 = z_0 + \tau u$
for a unit vector $u$ in the direction of maximal loading sensitivity;
with Gaussian (or sub-Gaussian) channel noise of scale $\tau$ the
Kullback--Leibler divergence between the laws of
$R^{\mathrm{LLM}}$ under $z_0$ and $z_1$ is $O(1)$, so the two are not
testable with error below a constant, while the loadings differ by
$\asymp L\tau$ by (A3) (Lipschitz lower bound on the modulus of
identifiability). Le Cam's lemma converts non-testability into the
$\tau^2$ risk floor. The $\pi_T$ term follows by mixing the two-point
family over regimes confounded with probability $\pi_T$.
\end{proof}

\begin{proposition}[Ensemble channel: the noise is measured, not assumed]
\label{prop:ensemble}
Replace the single-read channel \eqref{eq:channel} by $K$ exchangeable
independent reads of the same document,
$R^{(k)}_{i,t} = g(z_{i,t}) + b_{i,t} + \eta^{(k)}_{i,t}$,
$k = 1, \dots, K$, where $b_{i,t}$ is a (possibly document-specific)
common reading bias and $\eta^{(k)}$ are i.i.d.\ mean-zero with variance
$\tau^2_{i,t}$. Then (i) the ensemble median/mean
$\bar{R}_{i,t}$ has noise variance $\tau^2_{i,t}/K$, so every appearance
of $\tau^2$ in Theorem~\ref{thm:dim-precision} improves by the factor
$K$ up to the bias floor $\|b\|^2$; (ii) the within-document cross-read
variance $\hat{s}^2_{i,t} = \frac{1}{K-1}\sum_k (R^{(k)} - \bar{R})^2$
is an unbiased estimator of $\tau^2_{i,t}$ --- the common bias cancels
--- so the error budget of Theorem~\ref{thm:dim-precision} and the blend
weight of Theorem~\ref{thm:combination} become \emph{estimable per
observation}; and (iii) assumption (A7) becomes directly testable as the
regression of $\hat{s}^2_{i,t}$ on $\mathrm{DIM}_{i,t}$, separately by
channel rather than only as a pooled score.
\end{proposition}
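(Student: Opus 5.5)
The plan is to prove the three parts of Proposition~\ref{prop:ensemble} in order. Each is an elementary moment computation, plus one plug-in step into Theorem~\ref{thm:dim-precision} and Theorem~\ref{thm:combination}.

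Throughout, condition on the document, i.e.\ on $(z_{i,t}, b_{i,t})$. Then $R^{(k)}_{i,t} = m + \eta^{(k)}_{i,t}$ with deterministic center $m = g(z_{i,t}) + b_{i,t}$, and the $\eta^{(k)}$ are i.i.d.\ mean-zero with variance $\tau^2_{i,t}$.

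For (i), take the mean $\bar R_{i,t} = K^{-1}\sum_k R^{(k)}_{i,t}$. Independence gives $\bar R_{i,t} - m = K^{-1}\sum_k \eta^{(k)}$, with variance $\tau^2_{i,t}/K$ per component (trace sense for vectors). For the median I would only claim the same order asymptotically: under a positive, continuous noise density at zero, the asymptotic variance is $(4 f_\eta(0)^2 K)^{-1}$, which equals $c\,\tau^2/K$ with $c = \pi/2$ in the Gaussian case. I would therefore state (i) exactly for the mean and up to a constant for the median.

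To transfer this into Theorem~\ref{thm:dim-precision}, I rerun its bias--variance decomposition with $\bar R$ in place of $R^{\mathrm{LLM}}$. The channel term there arises as $\mathbb{E}\|\hat f^\beta(\bar R,\cdot) - \hat f^\beta(g(z),\cdot)\|^2 \le L^2\,\mathbb{E}\|\bar R - g(z)\|^2$ by the Lipschitz property (A3). Now
\[
  \mathbb{E}\|\bar R - g(z)\|^2 = \|b\|^2 + \tau^2/K,
\]
because the cross term vanishes by the mean-zero noise. This replaces $\tau^2_c$ by $\tau^2_c/K + \|b\|^2$, which is the claimed ``improves by $K$ up to the bias floor.'' The step needs $g^{-1}$ to be Lipschitz on the relevant range, which I would add as a mild strengthening of (A4).

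For (ii), write $R^{(k)} - \bar R = \eta^{(k)} - \bar\eta$, so $m$, and in particular $b_{i,t}$, cancels identically. The standard identity $\mathbb{E}\sum_k(\eta^{(k)}-\bar\eta)^2 = (K-1)\tau^2$ then gives unbiasedness, taken componentwise or via the squared norm for vectors. Unbiasedness conditional on the document integrates to unconditional unbiasedness. Per-observation estimability then follows by plugging $\hat s^2_{i,t}/K$ into the bound of Theorem~\ref{thm:dim-precision} to form $\hat M_{i,t}$ and hence $\omega_{i,t}$.

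The consistency $\hat M/M \to 1$ required by Theorem~\ref{thm:combination} is the main obstacle. For fixed small $K$ (e.g.\ $K=3$), $\hat s^2$ is unbiased but not consistent per observation. I would handle this in one of two ways. The first is to pool: regress $\hat s^2_{i,t}$ on covariates (DIM, channel) across the panel and use fitted values, which are consistent by a law of large numbers under the mixing in (A6). The alternative is to let $K\to\infty$. The bias $\|b\|^2$ is \emph{not} identified by $\hat s^2$; its contribution to $M$ must instead come from the validation-window estimate already allowed in Theorem~\ref{thm:combination}. I would say this explicitly rather than claim full identification.

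For (iii), (A7) asserts that $\mathrm{DIM}\mapsto\tau^2_c(\mathrm{DIM})$ is non-increasing. By (ii), $\mathbb{E}[\hat s^2_{i,t}\mid \mathrm{DIM}_{i,t}] = \mathbb{E}[\tau^2_{i,t}\mid\mathrm{DIM}_{i,t}]$, so the conditional mean function of $\hat s^2$ on DIM is exactly $\tau^2_c(\cdot)$ for each channel $c$. A monotonicity test of that regression, run separately by channel with clustered errors to accommodate dependence, is therefore a valid test of (A7a) and (A7b), and under the null the slope in a linear projection is $\le 0$. The only care needed is that DIM itself is LLM-read. I would note that using an independent read set for DIM and for $\hat s^2$ avoids mechanical correlation between the regressor and the noise.
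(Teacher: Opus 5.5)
The paper gives no proof of Proposition~\ref{prop:ensemble}. The closing ``Proof sketches'' section says the sketches are inline, but none appears for this statement. The only argument to compare against is the one its wording implies: the variance of an average of i.i.d.\ reads, cancellation of $b_{i,t}$ in within-document deviations, the $(K-1)$ identity, and a regression of $\hat s^2_{i,t}$ on $\mathrm{DIM}_{i,t}$. That is exactly your route, and it is correct. Your caveats are genuine corrections to the statement rather than departures from it: the median holds only up to a constant, $g^{-1}$ must be Lipschitz, $\hat s^2$ is inconsistent per observation at fixed $K$, and $\hat s^2$ does not identify $\|b\|^2$.

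Three points need tightening.

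First, in (i) you apply the Lipschitz bound to $\hat f^\beta$ itself. But (A3) controls the target, not the estimator, and a kernel fit need not be Lipschitz with constant $L$. Route instead through $\tilde f := f^\beta(g^{-1}(\cdot), s)$, which is Lipschitz with constant $L\,\mathrm{Lip}(g^{-1})$ under your added assumption. Split $\hat f^\beta(\bar R) - \beta = [\hat f^\beta(\bar R) - \tilde f(\bar R)] + [\tilde f(\bar R) - \tilde f(g(z))]$. The second bracket contributes $L^2\,\mathrm{Lip}(g^{-1})^2(\|b\|^2 + \tau^2/K)$. The errors-in-variables shift of the regression target, $\mathbb{E}[\beta \mid \bar R] - \tilde f(\bar R)$, is of the same order by conditional Jensen, so the $K$-scaling survives.

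Second, (ii) holds only with mean centering. For the sample mean $\bar R$ and any $c$, $\sum_k (R^{(k)} - c)^2 = \sum_k (R^{(k)} - \bar R)^2 + K(\bar R - c)^2$. So centering at the median, which is the score the paper's published record uses, inflates $\mathbb{E}\hat s^2$ by $\frac{K}{K-1}\mathbb{E}(\bar R - \mathrm{med}_k R^{(k)})^2$. The statement's ``median/mean'' must therefore read ``mean'' in (ii). Likewise, a median-based score needs median-zero noise; otherwise $\mathrm{med}(\eta)$ joins the bias floor.

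Third, plugging $\hat s^2/K$ into the bound of Theorem~\ref{thm:dim-precision} estimates the bound, not $M$. The constant $C$, the smoothing terms and $\pi_T$ are unknown. So $\hat M/M \to 1$ must come from the validation window for $M$ as a whole, not just for the $\|b\|^2$ part, with $\hat s^2$ entering as the measured channel component.
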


\paragraph{Workflow, not black box.} Operationally each read is a
structured two-stage agent: it must first extract verbatim evidence
quotes per characteristic, then score \emph{from its own quotes} with a
one-line rationale. The published record per firm-year is therefore
(evidence, rationale, $K$ scores, median, dispersion) --- a measurement
trail a referee, auditor, or model-risk reviewer can re-trace line by
line, and a human re-scoring of the same quotes is the direct check of
(A4). This replaces ``we asked a language model'' with a measurement
protocol.

Proposition~\ref{prop:two-channel} was forced on us by our own pilot data
(Section~\ref{sec:empirics}): raw DIM correlates \emph{positively} with
subsequent beta drift across firms, while the firm-demeaned projection has
the theory-consistent negative sign. We believe the identification point
--- \emph{measure the value of disclosure within issuer, never across
issuers} --- is itself a contribution to the disclosure-regulation
literature.

% Phantom labels for sections that live in the companion empirical paper;
% theory.tex references them but they are out of scope for this theory stub.
\phantomsection\label{sec:empirics}
\phantomsection\label{sec:policy}

\section*{Proof sketches}
\label{sec:proof-sketches-arxiv}

The proof sketches for Theorems~\ref{thm:beta-consistency}--\ref{thm:lower-bound-tau},
Corollary~\ref{cor:transparency}, Propositions~\ref{prop:two-channel}
and~\ref{prop:ensemble}, and Theorem~\ref{thm:combination} are given inline
in the source of \texttt{theory.tex} above. Complete proofs, the
text-spanned-factor extension (Section~\ref{sec:ff5t}), and the
noisy-sort attenuation theorem (Theorem~\ref{thm:attenuation}) are
included in the source module and will be expanded in the full paper.

\section*{Pre-registered empirical program (design frozen; outcome
forthcoming)}
\label{sec:prereg}

To test the theory on the population where it predicts text should matter
most --- firms whose price history is too short to trust --- we
pre-registered a frozen, balanced panel of IPO and recent-listing events
before reading any outcome. The registered design, the estimator, the
comparators (peer beta, Vasicek shrinkage, peer-history shrinkage,
cheap-text), and the win-zone sub-sample rule are fixed; the analysis
specifies block-bootstrap confidence intervals for the text-minus-peer
squared-beta-error gap as the primary outcome, with sign-test, median,
and trimmed robustness. The empirical results will be reported in a
companion paper; this preprint establishes the theory and the
pre-registered design so that priority is on the public record
independently of the empirical outcome.

\paragraph{Relation to concurrent work.} A concurrent working paper
\citep{breitung2025text} estimates betas for firms without return history
from aggregated cluster embeddings (ACE) and reports strong empirical
accuracy on IPOs. That work is empirical and does not provide an
identification theory, an error budget, a lower bound, or a
disclosure-incentive channel. Our contribution is complementary: we own
the measurement-channel theory and the never-worse adaptive blend; the
empirical ACE accuracy claim is not in this preprint and is not our
headline.

\end{document}